\newtheorem{theorem}{Theorem}
\newtheorem{lemma}{Lemma}
\newcommand{\mathletter}[1]{%
	\expandafter\newcommand\csname b#1\endcsname{\mathbb #1}
	\expandafter\newcommand\csname c#1\endcsname{\mathcal #1}
	\expandafter\newcommand\csname f#1\endcsname{\mathfrak #1}
	\expandafter\newcommand\csname til#1\endcsname{\widetilde #1}
	\expandafter\newcommand\csname ha#1\endcsname{\widehat #1}
	\expandafter\newcommand\csname s#1\endcsname{\boldsymbol #1}
}%
\def\mathletters#1{\mathlettersB #1,,}
\def\mathlettersB#1,{\ifx,#1,\else\mathletter #1\expandafter\mathlettersB\fi}
\def\tr{\text{tr}}
\def\ox{\overline{X}}
\def\ou{\overline{U}}
\def\ha{\widehat{A}}
\def\hb{\widehat{B}}
\def\bee{\begin{equation}}
	\def\ene{\end{equation}}
\def\been{\begin{equation*}}
	\def\enen{\end{equation*}}
\title{\LARGE \bf
Linear Convergence of Data-Enabled Policy Optimization for Linear Quadratic Tracking
}
\author{Shubo Kang, Feiran Zhao, Keyou You% <-this % stops a space
\thanks{This work was supported by National Science and Technology Major Project of China (2022ZD0116700) and National Natural Science Foundation of China (62033006, 62325305). (Corresponding author: Keyou You)}
\thanks{S. Kang, and K. You are with the Department of Automation and BNRist, Tsinghua University, Beijing 100084, China. (e-mail: ksb20@mails.tsinghua.edu.cn, youky@tsinghua.edu.cn.) }
\thanks{F. Zhao is with the Department of Information Technology and Electrical Engineering, ETH Z\"{u}rich, 8092 Z\"{u}rich, Switzerland. (e-mail: zhaofe@control.ee.ethz.ch)}}%
\begin{document}

\maketitle
\thispagestyle{empty}
\pagestyle{empty}

%%%%%%%%%%%%%%%%%%%%%%%%%%%%%%%%%%%%%%%%%%%%%%%%%%%%%%%%%%%%%%%%%%%%%%%%%%%%%%%%
\begin{abstract}
	Data-enabled policy optimization (DeePO) is a newly proposed method to attack the open problem of direct adaptive LQR. In this work, we extend the DeePO framework to the linear quadratic tracking (LQT) with offline data. By introducing a covariance parameterization of the LQT policy, we derive a direct data-driven formulation of the LQT problem. Then, we use gradient descent method to iteratively update the parameterized policy to find an optimal LQT policy. Moreover, by revealing the connection between DeePO and model-based policy optimization, we prove the linear convergence of the DeePO iteration. Finally, a numerical experiment is given to validate the convergence results. We hope our work paves the way to direct adaptive LQT with online closed-loop data. 
\end{abstract}

%%%%%%%%%%%%%%%%%%%%%%%%%%%%%%%%%%%%%%%%%%%%%%%%%%%%%%%%%%%%%%%%%%%%%%%%%%%%%%%%

\section{Introduction}
Over the last decade, data-driven approaches have emerged as a major development in control \cite{annaswamy2024control}. The various data-driven control methods generally be categorized into two primary classes: \emph{indirect} methods, which involve system identification followed by model-based control design; \emph{direct} methods, which bypass the model estimation step and design the controller directly from data. As an end-to-end method, direct data-driven control are conceptually straightforward and relatively easy to implement in practice. Consequently, numerous direct methods and theories have been proposed in recent years, such as Data-EnablEd Predictive Control (DeePC) \cite{coulson2019data,chiuso2023harnessing}, data informativity \cite{van2020data, van2020noisy}, and direct Linear Quadratic Regulator (LQR) methods \cite{dorfler2023certainty, de2021low}, among others.

A central problem in direct data-driven control is the \emph{adaptive} designs that can update control policies using real-time closed-loop data. The problem remained unsolved until recently, when our prior work \cite{zhao2023data,zhao2024data} introduced a novel approach named \textbf{Data-EnablEd Policy Optimization (DeePO)}. DeePO draws on the principles of policy optimization (PO), which is well-known by its application in reinforcement learning \cite{bertsekas2019reinforcement}. The theoretical analysis for data-driven PO used to base on the zeroth-order optimization, and has given convergence results  forstabilizing control \cite{zhao2024convergence}, LQR \cite{fazel2018global,mohammadi2021convergence}, , mix $\cH_2$-$\cH_\infty$ control \cite{zhang2019policy}, etc. Although the PO framework is well-suited for adaptive control, zeroth-order optimization is not since it needs a large number of complete trajectory data to obtain the optimal policy, making it impossible to be implemented online. In contrast, our DeePO method calculates the policy gradient using only a batch of persistently exciting (PE) data. Further, by introducing a new data-based policy parameterization, the policy can be updated efficiently using the online data. In \cite{zhao2024data}, the online DeePO algorithm is proved to converge at a sublinear rate $\cO(1/\sqrt{T})$, comparable to the best indirect adaptive control algorithms. 

Previous studies on DeePO have primarily focused on the LQR problem. As a natural extension, the linear quadratic tracking (LQT) problem aims to track a reference trajectory by simultaneously minimizing the tracking error and energy consumption. Unlike LQR, the optimal LQT policy contains an additional feedforward term dependent on the reference signal \cite{lewis2012optimal}, making the policy parameterization in \cite{zhao2024data} unsuitable for direct application to LQT.  Hence, extending the DeePO approach to the LQT problem is nontrivial and warrants further investigation. 

In this work, we extend the offline DeePO to solve the LQT problem. Our contributions are as following:
\begin{itemize}
	\item We propose a covariance parameterization that fits the LQT policy, extending the DeePO parameterization in \cite{zhao2024data}. Then, a data-driven PO method is
	derived, which achieves performance comparable to indirect methods without requiring system identification. Additionally, since the theory of DeePO relies heavily on its offline analysis like the gradient domination, the method serves as the vital foundation of the future research on online adaptive LQT.
	\item We establish the first linear convergence proof for the offline DeePO method by revealing its connection to model-based policy optimization, thereby improving the sublinear rate established in \cite{zhao2024data}. A numerical experiment is implemented to validate the linear convergence. 
\end{itemize}

The rest of this paper is organized as follows. In Section II, we recapitulate the model-based and data-driven LQT problem. In Section III, we derive the DeePO for LQT and prove its convergence property. In Section IV, a numerical experiment is implemented to support our result. Conclusion and future work in Section V complete this paper.

\textbf{Notation.} We use $I_n$ to denote the $n$ by $n$ identify matrix. We use $\Vert \cdot \Vert$ to denote the $2$-norm and $\rho(\cdot)$ to denote the spectral radius of a square matrix. We use $\underline{\sigma}(\cdot)$ to denote the minimal singular value of a matrix. We use $\dagger$ to denote the right inverse of a full row rank matrix.

\section{Data-driven formulation of the linear quadratic tracking}
In this section, we revisit the model-based and indirect data-driven LQT problem. Subsequently, we present the problem that will be addressed in this work.

\subsection{The model-based LQT problem}
Consider the linear time invariant system:
\bee \label{linearsys}
x_{t+1} = Ax_t +Bu_t + w_t,
\ene
where $x_t \in \bR^n$ and $u_t \in \bR^m$ are the system states and inputs, $w_t\sim\cN(0,W)$ is i.i.d Gaussian noise. We assume the system $(A, B)$ is controllable. 

In this work, we consider the LQT problem of the setpoint case, which aims to regulate the system to track a constant signal $z_t = \delta, t = 0,1,\cdots$. The LQ cost at each step is designed to be $(x_t-\delta)^{\top}Q(x_t-\delta)+u_t^{\top}Ru_t$, where $Q \succ 0$ and $R \succ 0$ are the penalty matrices. Since the one-step LQ cost may not converge to zero when $\delta \neq 0$, the LQT is phrased as finding a policy $\theta = [K~l]$ to minimize the average cost:
\bee \label{cost}
\begin{aligned}
&J(\theta) = \\
&~~~~\limsup_{T\rightarrow\infty}~\frac{1}{T} \bE\left[\sum_{t=0}^{T-1}(x_t-\delta)^{\top}Q(x_t-\delta)+u_t^{\top}Ru_t\right].
\end{aligned}
\ene
where $u_t = K x_t + l$ starting from the initial state $x_0$. 

If the system model $(A,B)$ is known, the unique optimal policy is well-established \cite{lewis2012optimal}:
\bee
\begin{aligned} \label{model_based_solution}
	K^* &= -(B^\top P^* B + R)^{-1}B^\top P^* A, \\
	l^* &= -(B^\top P^* B + R)^{-1}B^\top(I - A - BK)^{-\top} Q\delta,
\end{aligned}
\ene
and $P^*$ is the unique positive semi-definite solution to the discrete-time algebraic Riccati equation:
\bee \label{model_based_solution_2}
	P^* = A^\top P^* A + Q + A^\top P^*B(B^\top P^* B + R)^{-1} B^\top P^* A.
\ene

\subsection{Indirect data-driven formulation for LQT}
In this work, we assume that the system $(A,B)$ are unknown. Instead, we have access to an offline dataset consisting of $T$-length sequences of states and inputs:
$$
\begin{aligned}
	X_{0} &:= \begin{bmatrix}
		x_0& x_1& \dots& x_{T-1}
	\end{bmatrix}\in \mathbb{R}^{n\times T},\\
	U_{0} &:= \begin{bmatrix}
		u_0& u_1& \dots& u_{T-1}
	\end{bmatrix}\in \mathbb{R}^{m\times T}, \\
	W_{0} &:= \begin{bmatrix}
		w_0& w_1& \dots& w_{T-1}
	\end{bmatrix}\in \mathbb{R}^{n\times T}, \\
	X_{1} &:= \begin{bmatrix}
		x_1& x_2& \dots& x_T
	\end{bmatrix}\in \mathbb{R}^{n\times T},
\end{aligned}
$$
which satisfies the system dynamics \eqref{linearsys} as
$$
X_{1} = A X_{0} + B U_{0} + W_{0}.
$$
We assume the data is persistently exciting \cite{willems2005note}, i.e., $D_0 \triangleq [U_0^\top~~X_0^\top]^\top$ has full row rank. The assumption is standard and widely adopted in the data-driven control setting \cite{de2019formulas, kang2023minimum}.

The indirect data-driven method first estimates a model $(\ha,\hb)$ from the data and subsequently applies any model-based control method. Define the least square estimate of $[A,B]$ as $[\ha~\hb]= \mathop{\arg\min}_{A,B}\Vert X_1 - AX_0 - BU_0\Vert$. Then, the indirect tracking problem is
\bee \label{ce_problem}
\begin{aligned}
	&\min \limits_{\theta}~ J(\theta), \\
	&~s.t. ~~x_{t+1} = \ha x_t +\hb u_t + w_t, \\
	&~~~~~~~~~~u_t = K x_t + l.
\end{aligned}
\ene
We can replace $A,B$ by $\ha,\hb$ in \eqref{model_based_solution} and \eqref{model_based_solution_2} to obtain the unique solution of \eqref{ce_problem}, which we denote as $\hat{K}$ and $\hat{l}$.

\subsection{Policy optimization approach for the LQT}

The LQT problem \eqref{ce_problem} can also be solved via policy optimization \cite{zhao2023global}. Specifically, let $\nabla_\theta J(\theta)$ be the policy gradient, the iterative update
\bee
	\theta^+ = \theta - \eta \nabla_\theta J(\theta)
\ene
converges linearly with a small enough stepsize $\eta$. 

However, calculating the policy gradient on $K$ and $l$ requires the system $A,B$ or their estimate $\ha,\hb$. In this work, we aim to propose a direct data-driven approach to solve the LQT problem via policy optimization,wherein the policy gradient is computed directly from the raw data matrices, without requiring explicit model identification.

\section{Data-enabled policy optimization for LQT} \label{sec::main}
In this section, we first propose a covariance parameterization for the LQT policy, enabling us to formulate the LQT problem in a direct data-driven fashion. Building on this, we propose our DeePO method to solve the LQT problem. By revealing the connection between DeePO and traditional model-based policy optimization, we establish the linear global convergence of our approach.

\subsection{Covariance parameterization for the LQT policy}
In previous work, DeePO \cite{zhao2024data} introduced a novel parameterization for the LQR problem based on the data covariance, which allows the policy gradient to be computed directly from data. However, for the LQT problem, the presence of a feedforward term $l$ necessitates modifying the existing covariance parameterization to accommodate tracking control. 

First, define data sample covariance as
$$
\Lambda : = \frac{1}{T}\begin{bmatrix}
	U_0 \\
	X_0
\end{bmatrix}\begin{bmatrix}
	U_0 \\
	X_0
\end{bmatrix}^{\top},
$$
Also, define the following verified data matrices:
$$
\begin{aligned}
	&\ox_0= \frac{1}{T}X_0D_0^{\top}, \ou_0= \frac{1}{T}U_0D_0^{\top},\\
	&\overline{W}_0= \frac{1}{T}W_0D_0^{\top}, \ox_1= \frac{1}{T}\ox_1D_0^{\top}.
\end{aligned}
$$
Then, the policy parameter $K,l$ can be reparameterized as $$\xi = [V~h]$$ satisfying 
\bee \label{parametrization}
\begin{bmatrix} K \\ I_n \end{bmatrix}=\Lambda V = \begin{bmatrix}\ou_0 \\ \ox_0\end{bmatrix}V, 
\begin{bmatrix} l \\ 0 \end{bmatrix}=\Lambda h=\begin{bmatrix}\ou_0 \\ \ox_0\end{bmatrix}h.
\ene
By \eqref{parametrization}, the system dynamics can be represented via data matrices and the policy $\xi$ as
\bee \label{close_loop_1}
A+BK = (B\ou_0+A\ox_0)V = \ox_1V - \overline{W}_0V,
\ene
and
\bee \label{close_loop_2}
Bl = (B\ou_0+A\ox_0)h = \ox_1 h - \overline{W}_0h.
\ene

Since the noise term $W_0$ is unknown, we disregard $W_0$ in \eqref{close_loop_1} and \eqref{close_loop_2} based on the certainty equivalence principle \cite{dorfler2023certainty}. Thus, we formulate the direct data-driven LQT problem as the following data-based optimization:
\bee \label{data_driven_problem}
\begin{aligned}
	&\min \limits_{\xi}~ J(\xi), \\
	&~s.t. ~~x_{t+1} = \ox_1V x_t +\ox_1 h + w_t, \\
	&~~~~~~~~~~u_t = \ou_0 V x_t + \ou_0 h.
\end{aligned}
\ene
where, with a slight abuse of notation, $J(\xi)$ is the cost function under policy $\xi$. $J(\xi)$ is defined to equal to $J(\theta)$ as long as $\xi$ and $\theta$ satisfy \eqref{parametrization}. The problem \eqref{data_driven_problem} can be seen as an approximation of minimizing \eqref{cost}.

Compared to the conventional policy parameterization in \cite{de2019formulas}, the covariance parameterization \eqref{parametrization} offers several benefits. First, $\xi$ have constant size, independent of the data length $T$. Second, there is a unique mapping between $\xi$ and the model-based parameter $\theta$, ensuring consistency between data-driven and model-based approaches. Third, the parameterization has an intrinsic connection with the least-squares estimate of the system matrices $(\ha,\hb)$. To illustrate this connection, note that
\bee \label{conn_data_estimation}
\begin{aligned}
	&\ha\ox_0 + \hb \ou_0 = [\hb~\ha]\begin{bmatrix}
		\ou_0 \\ \ox_0
	\end{bmatrix} \\
	&= \left(X_1 D_0^T (D_0 D_0^T)^{-1}\right)\left(\frac{1}{T}D_0 D_0^\top\right) = \ox_1.
\end{aligned}
\ene
Furthermore, it holds that
\begin{subequations} \label{conn_data_estimation_2}
	\bee \ha + \hb K = \left(\hb\ou_0+\ha\ox_0\right)V = \ox_1 V, \ene
	\bee \hb l = \left(\hb\ou_0+\ha\ox_0\right)h = \ox_1 h.\ene
\end{subequations}
The equation \eqref{conn_data_estimation} and \eqref{conn_data_estimation_2} show the equivalence of the LQT problem \eqref{data_driven_problem} and \eqref{ce_problem}, which is utilized to prove the linear convergence in this work. These advantages are also proved to be significant to design an adaptive algorithm \cite{zhao2024data}, which will be extended to LQT in the future work.

\subsection{The DeePO algorithm for solve \eqref{data_driven_problem}} 
We will show the policy gradient of \eqref{data_driven_problem}, and then give our DeePO algorithm based on the gradient. 

Hereafter, we define the feasible policy set by
$$\cS = \left\{\xi|\ox_0V = I, \ox_0h = 0 \text{ and } \rho(\ox_1 V)<1 \right\}.$$
Then for some $\xi \in \cS$, define $P_V$ as the unique positive semi-definite solution of the Lyapunov equation:
$$
	P_V = Q + V^\top \ou_0^\top R \ou_0V+V^\top \ox_1^\top P_V \ox_1V.
$$
Besides, define the following policy-dependent values:
\begin{gather}
	Y_V = (I - \ox_1V)^{-1}, E_V = (\ou_0^\top R\ou_0 + \ox_1^\top P_V\ox_1)V, \nonumber \\
	g^\top_\xi = (-\delta^\top Q + h^\top \ou_0^\top R\ou_0V + h^\top \ox_1^\top P_V \ox_1 V)Y_V, \nonumber \\
	G_\xi = \ox_1^\top g_\xi + \ou_0^\top R\ou_0 h\ + \ox_1^\top P_V\ox_1 h, \nonumber \\
	\Sigma_V = W + \ox_1 V \Sigma_V V^\top \ox_1^\top. \nonumber
\end{gather}

We give the explicit form of $J(\xi)$ and the policy gradient $\nabla_\xi J(\xi)$ in the following lemmas. 
\begin{lemma}[Cost function] \label{theo_cost}
	For any $\xi \in \cS$, it follows that 
	$$
	\begin{aligned}
		J(\xi) = &\ \delta^\top Q \delta + h^\top \ou_0^\top R\ou_0 h + h^\top \ox_1^\top P_V\ox_1 h\\
		&+ 2g^\top_\xi \ox_1h+tr(P_VW).
	\end{aligned}
	$$
\end{lemma}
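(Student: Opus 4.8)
The plan is to compute the stationary average cost directly and then recognize its compact form through the Lyapunov equation defining $P_V$. Since $\xi \in \cS$ guarantees $\rho(\ox_1 V) < 1$, the closed-loop recursion $x_{t+1} = \ox_1 V x_t + \ox_1 h + w_t$ admits a unique stationary distribution. First I would split the state into its stationary mean and fluctuation, $x_t = \mu + \tilde{x}_t$, where the mean solves $\mu = \ox_1 V \mu + \ox_1 h$, i.e.\ $\mu = Y_V \ox_1 h$, and the zero-mean part obeys $\tilde{x}_{t+1} = \ox_1 V \tilde{x}_t + w_t$ with stationary covariance $\Sigma_V$. Because the stage cost is quadratic and $w_t$ is independent of $\tilde{x}_t$, the average cost separates into a deterministic ``bias'' term evaluated at the mean and a stochastic ``variance'' term involving only $\Sigma_V$:
\begin{equation*}
J(\xi) = (\mu - \delta)^\top Q (\mu - \delta) + \mu_u^\top R \mu_u + \tr(Q \Sigma_V) + \tr(R \ou_0 V \Sigma_V V^\top \ou_0^\top),
\end{equation*}
where $\mu_u = \ou_0 V \mu + \ou_0 h$ is the mean input.

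For the stochastic part, I would take the trace of the $P_V$ Lyapunov equation against $\Sigma_V$. Using the cyclic property together with the covariance Lyapunov equation $\ox_1 V \Sigma_V V^\top \ox_1^\top = \Sigma_V - W$, the term $\tr(V^\top \ox_1^\top P_V \ox_1 V \Sigma_V)$ becomes $\tr(P_V(\Sigma_V - W))$, so the $\tr(P_V \Sigma_V)$ contributions cancel and one is left with $\tr(Q \Sigma_V) + \tr(R \ou_0 V \Sigma_V V^\top \ou_0^\top) = \tr(P_V W)$, which is precisely the noise term of the lemma.

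The deterministic part is where the real work lies. Substituting $\mu = Y_V \ox_1 h$ and $\mu_u = \ou_0 V Y_V \ox_1 h + \ou_0 h$ and expanding, the constant $\delta^\top Q \delta$ and the purely feedforward piece $h^\top \ou_0^\top R \ou_0 h$ appear immediately, while the remaining quadratic form in $\ox_1 h$ carries the matrix $Y_V^\top (Q + V^\top \ou_0^\top R \ou_0 V) Y_V$. Here I would invoke the Lyapunov identity $Q + V^\top \ou_0^\top R \ou_0 V = P_V - V^\top \ox_1^\top P_V \ox_1 V$ together with the resolvent identity $\ox_1 V Y_V = Y_V - I$ to collapse this matrix to $Y_V^\top P_V + P_V Y_V - P_V$. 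Combining with the linear cross-terms $-2\delta^\top Q Y_V \ox_1 h + 2 h^\top \ou_0^\top R \ou_0 V Y_V \ox_1 h$ and applying $\ox_1 V Y_V = Y_V - I$ once more, I would recognize the total as exactly $h^\top \ox_1^\top P_V \ox_1 h + 2 g_\xi^\top \ox_1 h$ with $g_\xi$ as defined, which finishes the match.

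The step I expect to be the main obstacle is the bookkeeping in the deterministic term: one must repeatedly trade the resolvent factor $Y_V$ against $\ox_1 V$ via $\ox_1 V Y_V = Y_V - I$ and fold the stage-cost weights into $P_V$ through the Lyapunov equation, while ensuring the asymmetric products symmetrize correctly (using that a scalar equals its own transpose) so that the cross-terms assemble into the single bilinear form $2 g_\xi^\top \ox_1 h$ rather than leaving stray $Y_V$ factors behind.
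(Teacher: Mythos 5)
Your proof is correct, but it follows a genuinely different route from the paper. The paper proves Lemma~\ref{theo_cost} via dynamic programming: it introduces the relative value (advantage) function $f(x;\xi)$, posits the quadratic form $f(x)=x^\top P x + 2g^\top x + s$, and matches coefficients in the average-cost Bellman equation $f(x_t)=\bE\{(x_t-\delta)^\top Q(x_t-\delta)+u_t^\top R u_t - J(\xi)+f(x_{t+1})\}$; the $x^\top(\cdot)x$ and $x$ coefficients recover the defining equations for $P_V$ and $g_\xi$, and the constant-term equation yields the cost formula directly. You instead compute the ergodic cost from the stationary distribution, splitting the state into mean $\overline{x}=Y_V\ox_1 h$ plus fluctuation with covariance $\Sigma_V$, killing the stochastic part by the trace duality between the two Lyapunov equations ($\tr(Q\Sigma_V)+\tr(V^\top\ou_0^\top R\ou_0 V\Sigma_V)=\tr(P_V W)$), and collapsing the deterministic part via $Q+V^\top\ou_0^\top R\ou_0 V = P_V - V^\top\ox_1^\top P_V\ox_1 V$ and $\ox_1 V Y_V = Y_V - I$; I checked the key reduction $Y_V^\top(P_V - V^\top\ox_1^\top P_V\ox_1 V)Y_V = Y_V^\top P_V + P_V Y_V - P_V$ and the resulting identification with $h^\top\ox_1^\top P_V\ox_1 h + 2g_\xi^\top\ox_1 h$, and both are exact. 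What each approach buys: the paper's DP argument produces $P_V$, $g_\xi$, and the cost in one coefficient-matching pass and explains \emph{why} those policy-dependent quantities have their stated forms, whereas your argument is more elementary (no need to assert the quadratic ansatz or invoke the average-cost Bellman equation) and, notably, your starting expression for $J(\xi)$ in terms of $\Gamma$, $\Sigma_V$, and $\overline{x}$ is exactly the representation the paper itself uses in its proof of Lemma~\ref{theo_gradient}, so your route unifies the two lemmas. The only point you should make explicit is that the $\limsup$ time-average cost in \eqref{cost} from an arbitrary initial state equals the expectation under the stationary distribution; this is standard for $\rho(\ox_1 V)<1$ (the state law converges geometrically, so Ces\`aro averages converge), and the paper implicitly relies on the same fact.
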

\begin{lemma}[Policy gradient] \label{theo_gradient}
	For any $\xi\in\cS$, the gradient of $J(\xi)$ in $\xi$ is 
	$$\nabla_\xi J(\xi) = 2\begin{bmatrix} E_V & G_\xi \end{bmatrix} \Phi_{\xi},$$
	where $\overline{x} = Y_V\ox_1h$ is the mean of the stationary state, and
	$$\Phi_{\xi} = \begin{bmatrix} \Sigma_V + \overline{x}\overline{x}^\top & \overline{x} \\ \overline{x}^\top & 1\end{bmatrix}.$$
\end{lemma}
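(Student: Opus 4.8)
The plan is to avoid differentiating the closed-form cost of Lemma~\ref{theo_cost} directly---which would require the derivatives of $P_V$, $g_\xi$ and $Y_V$ in both $V$ and $h$---and instead to use the average-cost policy-gradient (envelope) principle. The shape of the claim already points this way: $E_V$ and $G_\xi$ are exactly the coefficients produced by differentiating the one-step Bellman expression with the value function held \emph{frozen}, while $\Phi_\xi$ is precisely the stationary second moment of the augmented state $[x^\top~~1]^\top$ under $\mu_\xi$, the stationary distribution of the closed loop.

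First I would record the average-cost Bellman (Poisson) equation for $\xi\in\cS$. Taking the relative value function to be $x^\top P_V x + 2g_\xi^\top x + c_0$, and using the stage cost $c(x,u)=(x-\delta)^\top Q(x-\delta)+u^\top Ru$, the policy $u=\ou_0(Vx+h)$, and the transition $x^+=\ox_1(Vx+h)+w$, one checks that matching the quadratic, linear and constant terms in the Bellman equation reproduces the Lyapunov equation for $P_V$, the stated formula for $g_\xi$, and the cost of Lemma~\ref{theo_cost}; it also confirms that $\mu_\xi$ has mean $\overline{x}=Y_V\ox_1h$ and covariance $\Sigma_V$.

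Next I would invoke the policy-gradient theorem in the form $\nabla_\xi J(\xi)=\bE_{x\sim\mu_\xi}[\nabla_\xi q(x;\xi)]$, where $q(x;\xi)=c(x,\pi_\xi(x))+\bE_w[(x^+)^\top P_V x^+ +2g_\xi^\top x^+]$ is the one-step look-ahead and the gradient treats $P_V,g_\xi$ as constants. Writing $s=Vx+h$, a short computation gives $\nabla_s q=2(\ou_0^\top R\ou_0+\ox_1^\top P_V\ox_1)s+2\ox_1^\top g_\xi$, so by the chain rule $\nabla_V q=(\nabla_s q)\,x^\top$ and $\nabla_h q=\nabla_s q$. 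Taking the stationary expectation with $\bE[x]=\overline{x}$, $\bE[xx^\top]=\Sigma_V+\overline{x}\overline{x}^\top$ and $\bE[s\,x^\top]=V(\Sigma_V+\overline{x}\overline{x}^\top)+h\overline{x}^\top$, and recognizing $E_V=(\ou_0^\top R\ou_0+\ox_1^\top P_V\ox_1)V$ and $G_\xi=\ox_1^\top g_\xi+\ou_0^\top R\ou_0h+\ox_1^\top P_V\ox_1h$, the $V$-block collapses to $2[E_V(\Sigma_V+\overline{x}\overline{x}^\top)+G_\xi\overline{x}^\top]$ and the $h$-block to $2[E_V\overline{x}+G_\xi]$; stacking these is exactly $2[E_V~~G_\xi]\Phi_\xi$.

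The main obstacle is justifying the frozen-value-function (envelope) step, i.e. that the $\xi$-dependence of $P_V$ and $g_\xi$ contributes nothing to $\nabla_\xi J$. I would prove this by differentiating the Bellman equation in $\xi$, evaluating along $x\sim\mu_\xi$, and using stationarity---$x^+$ is again $\mu_\xi$-distributed when $x\sim\mu_\xi$---so that the term $\bE_{\mu_\xi}[\partial_\xi(x^\top P_V x+2g_\xi^\top x)]$ appears on both sides and cancels, leaving only the frozen gradient. This is the LQT analogue of the standard LQR gradient argument; the genuinely new bookkeeping is the linear term $2g_\xi^\top x$, which is responsible for the $G_\xi$ and $\overline{x}$ contributions. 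As a cross-check I would, if needed, differentiate the Lemma~\ref{theo_cost} formula directly (using the perturbed Lyapunov equation for $dP_V$), expecting the same result with considerably more algebra.
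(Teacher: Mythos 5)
Your proposal is correct, and it reaches the stated formula by a genuinely different route than the paper. The paper works by brute-force differentiation of explicit expressions: it first differentiates the closed-form cost of Lemma~\ref{theo_cost} in $h$ (where only $g_\xi$ and $\overline{x}$ carry the $h$-dependence, since $P_V$ and $Y_V$ depend on $V$ alone), obtaining $\nabla_h J = 2(G_\xi + E_V\overline{x})$; it then rewrites $J(\xi)$ as an expectation over the stationary distribution, differentiates in $V$ (importing the standard LQR-type identity that the quadratic part contributes $2E_V\Sigma_V$), and closes by observing the pattern $\nabla_V J = 2E_V\Sigma_V + \nabla_h J\,\overline{x}^\top$. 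You instead prove the result through the average-cost policy-gradient (envelope) theorem: freeze the relative value function $x^\top P_V x + 2g_\xi^\top x$ in the one-step look-ahead $q$, differentiate only the frozen $q$, and average against the stationary moments; the legitimacy of freezing is argued once, by differentiating the Poisson equation and cancelling $\bE_{\mu_\xi}[\partial_\xi(x^\top P_V x + 2g_\xi^\top x)]$ against its image under the stationary transition. Your computations check out: $\nabla_s q = 2(\ou_0^\top R\ou_0 + \ox_1^\top P_V\ox_1)s + 2\ox_1^\top g_\xi$ with $s = Vx+h$, and the stationary moments $\bE[x]=\overline{x}$, $\bE[xx^\top]=\Sigma_V+\overline{x}\overline{x}^\top$ reproduce exactly the two blocks $2[E_V(\Sigma_V+\overline{x}\overline{x}^\top)+G_\xi\overline{x}^\top]$ and $2[E_V\overline{x}+G_\xi]$, i.e.\ $2[E_V~~G_\xi]\Phi_\xi$. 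What your route buys is uniformity and transparency: the $V$- and $h$-blocks come out of a single chain-rule computation, and the envelope cancellation is made explicit rather than hidden inside the borrowed LQR gradient formula (which is itself usually derived by exactly such an argument or by differentiating the Lyapunov equation). What the paper's route buys is self-containedness at the level of elementary calculus on explicit formulas, at the price of heavier algebra and a split treatment of the two blocks. The only points you should make sure to state, for completeness, are the differentiability of $P_V$ and $g_\xi$ in $\xi$ on $\cS$ (immediate from the Lyapunov/linear equations and $\rho(\ox_1 V)<1$) and the identification of the limsup-average cost with the stationary expectation, both routine for stable linear-Gaussian dynamics.
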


The proofs are given in the Appendix. The feasible set $\cS$ has constrains $\ox_0V = I$ and $\ox_0h = 0$. Therefore, to ensure feasibility, we need to project the policy gradient $\nabla_\xi J(\xi)$ onto the null space of $\ox_0$. To achieve this, we employ the following projected gradient descent update rule for our DeePO algorithm:
\bee \label{projected_grad}
\xi^+ = \xi - \eta \Pi_{\ox_0} \nabla_\xi J(\xi),
\ene
where $\eta > 0$ is the stepsize, $\Pi_{\ox_0} := I - \ox_0^\dagger \ox_0$ is the projection operator, and the superscript $(\cdot)^+$ indicates the updated policy parameters after one iteration. Since the policy gradient $\nabla_\xi J(\xi)$ can be directly calculated using the data, this method is considered a direct data-driven method.

The feasible set $\cS$ also includes the constraint $\rho(\ox_1 V)<1$, which ensures the stability of the close loop system. However, we do not enforce this constraint directly through the projection step. Instead, we show later that, with a suitably chosen step size $\eta$, the stability constraint is satisfied automatically as the iterations progress.

\subsection{Global linear convergence of the DeePO algorithm} \label{sec::converge}
Due to the non-convexity of both $J(\xi)$ and $\cS$, the convergence of DeePO iteration \eqref{projected_grad} is not trivial. However, by translating \eqref{projected_grad} into a scaled model-based gradient descent iteration, we can leverage established results from model-based policy optimization \cite{zhao2023global} to demonstrate the global linear convergence of \eqref{projected_grad}. 

In the following, we formally establish the connection between our DeePO method with the model-based one.

\begin{lemma} \label{theo_equiv}
	For any $\xi \in \cS$,  the DeePO iteration \eqref{projected_grad} is equivalent to
	\bee \label{equ_model_based_po}
	\theta^+ = \theta - \eta M \nabla_\theta J(\theta),
	\ene
	where $ M = \ou_0(I-\ox_0^\dagger \ox_0)\ou_0^T$ depends only on the offline data. Furthermore, $M$ is positive definite.
\end{lemma}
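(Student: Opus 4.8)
The plan is to exploit the linear relation between the two policy parameterizations induced by the data and then transport the projected gradient step across it. The algebraic backbone is that, by the definition of the verified data matrices, $\begin{bmatrix}\ou_0\\ \ox_0\end{bmatrix}=\tfrac1T D_0 D_0^\top=\Lambda$, which is invertible since $D_0$ has full row rank. Hence the covariance parameterization \eqref{parametrization} reads $\Lambda V=\begin{bmatrix}K\\ I_n\end{bmatrix}$ and $\Lambda h=\begin{bmatrix}l\\ 0\end{bmatrix}$, so on $\cS$ the map $\theta=[K~l]=\ou_0[V~h]=\ou_0\xi$ is a well-defined bijection with inverse supplied by $\Lambda^{-1}$. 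In particular, $\theta^+:=\ou_0\xi^+$ is exactly the model-based policy attached to the updated $\xi^+$.

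Next I would relate the two gradients. Because $J(\xi)=J(\theta)$ whenever $\xi\in\cS$ and $\theta=\ou_0\xi$ --- the cost equivalence that \eqref{conn_data_estimation_2} guarantees through $\ox_1 V=\ha+\hb K$ and $\ox_1 h=\hb l$ --- differentiating along any feasible direction $\Delta\xi$ (one with $\ox_0\Delta\xi=0$, so that $\xi+t\Delta\xi$ stays in $\cS$ to first order) gives $\langle\nabla_\xi J(\xi),\Delta\xi\rangle=\langle\ou_0^\top\nabla_\theta J(\theta),\Delta\xi\rangle$. Since $\Pi_{\ox_0}=I-\ox_0^\dagger\ox_0$ is the orthogonal projector onto $\ker(\ox_0)$, precisely the set of such $\Delta\xi$, this upgrades to the projected identity $\Pi_{\ox_0}\nabla_\xi J(\xi)=\Pi_{\ox_0}\ou_0^\top\nabla_\theta J(\theta)$.

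With this in hand, I would substitute into \eqref{projected_grad} and left-multiply by $\ou_0$: the step becomes $\xi^+=\xi-\eta\,\Pi_{\ox_0}\ou_0^\top\nabla_\theta J(\theta)$, whence $\theta^+=\ou_0\xi^+=\theta-\eta\,\ou_0\Pi_{\ox_0}\ou_0^\top\nabla_\theta J(\theta)=\theta-\eta M\nabla_\theta J(\theta)$, which is \eqref{equ_model_based_po}. I would also check consistency of the correspondence: since $\ox_0\ox_0^\dagger=I$ gives $\ox_0\Pi_{\ox_0}=0$, we get $\ox_0\xi^+=\ox_0\xi=[I_n~0]$, so the constraints $\ox_0V^+=I_n$ and $\ox_0h^+=0$ persist and $\theta^+$ is genuinely the parameter of $\xi^+$.

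For $M=\ou_0\Pi_{\ox_0}\ou_0^\top$, the projector $\Pi_{\ox_0}$ is symmetric and idempotent, so $M=(\Pi_{\ox_0}\ou_0^\top)^\top(\Pi_{\ox_0}\ou_0^\top)\succeq0$; and $x^\top Mx=0$ forces $\Pi_{\ox_0}\ou_0^\top x=0$, i.e. $\ou_0^\top x=\ox_0^\top y$ for some $y$, so that $\begin{bmatrix}\ou_0^\top&\ox_0^\top\end{bmatrix}\begin{bmatrix}x\\ -y\end{bmatrix}=0$. As $\begin{bmatrix}\ou_0^\top&\ox_0^\top\end{bmatrix}=\Lambda$ is invertible, $x=0$, giving $M\succ0$. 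The step I expect to be the main obstacle is the gradient relation: the naive chain rule $\nabla_\xi J=\ou_0^\top\nabla_\theta J$ fails off $\cS$ because $\ox_1 V$ and $\ha+\hb K$ coincide only when $\ox_0 V=I_n$, so the equivalence must be argued along feasible directions and one must verify that this is exactly what the projection $\Pi_{\ox_0}$ retains; the bijection and the positive-definiteness argument are then routine consequences of the persistency of excitation that makes $\Lambda$ invertible.
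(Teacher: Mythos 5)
Your proof is correct, but it takes a genuinely different route from the paper's. The paper proceeds by explicit computation: it writes out the DeePO steps $K^+-K$ and $l^+-l$ using the gradient formula of Lemma \ref{theo_gradient}, uses $(I-\ox_0^\dagger\ox_0)\ox_0^\top=0$ together with \eqref{conn_data_estimation} to replace $\ou_0(I-\ox_0^\dagger\ox_0)\ox_1^\top$ by $M\hb^\top$, and then matches the result term by term against the model-based gradient expressions $\nabla_K J(\theta)$ and $\nabla_l J(\theta)$ quoted from \cite{zhao2023global}. You instead avoid both explicit gradient formulas: you observe that the data-driven and model-based costs coincide on the affine set $\{\xi:\ox_0 V=I_n,\ \ox_0 h=0\}$ (via \eqref{conn_data_estimation_2}), differentiate only along feasible directions, and convert this to the projected identity $\Pi_{\ox_0}\nabla_\xi J(\xi)=\Pi_{\ox_0}\ou_0^\top\nabla_\theta J(\theta)$, from which \eqref{equ_model_based_po} follows by left-multiplying the update by $\ou_0$. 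Your route is more structural: it does not consume Lemma \ref{theo_gradient} or the formulas of \cite{zhao2023global}, it would apply verbatim to any differentiable cost that factors through the parameterization in this way, and it makes explicit the (implicitly used) fact that the update preserves the constraints $\ox_0 V^+=I_n$ and $\ox_0 h^+=0$, which is what licenses identifying $\ou_0\xi^+$ with $\theta^+$. What the paper's computation buys is a concrete verification that the two gradient expressions really are consistent with each other, at the cost of heavier algebra; your positive-definiteness argument is essentially the paper's (persistency of excitation making $\Lambda$ invertible), just phrased through $x^\top M x=0$ rather than through $\mathrm{rank}(N)=m$. One small point common to both proofs: your appeal to the symmetry of $\Pi_{\ox_0}$ (and, equally, the paper's identity $(I-\ox_0^\dagger\ox_0)\ox_0^\top=0$) requires $\dagger$ to be the Moore--Penrose pseudoinverse rather than an arbitrary right inverse as the notation section suggests; this is worth stating explicitly, but it is a shared convention, not a gap.
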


The proof is given in the Appendix. By Lemma \ref{theo_equiv}, we only need to prove the convergence of \eqref{equ_model_based_po}, which is written under the model-based framework. Similar to the well-known work \cite{fazel2018global}, the proof is completed by providing the Lipschitz smoothness and gradient domination of the cost function $J(\theta)$ on the feasible set.

In the following, define sets
$$
\begin{aligned}
\cS_\theta &= \left\{\theta|\xi \in \cS \text{ and } \xi,\theta \text{ satisfy \eqref{parametrization}}\right\}, \\
\cS_\theta(a) &= \left\{\theta|\xi \in \cS, J(\xi) \leq a \text{ and } \xi,\theta \text{ satisfy \eqref{parametrization}}\right\}.
\end{aligned}
$$

\begin{lemma}[Gradient domination, {\cite[Lemma 6]{zhao2023global}}] \label{grad_dom}
	For any $a \geq J^*$ and $\theta \in \cS_\theta(a)$, there exists $\lambda(a) > 0$ such that
	\bee \label{gradient domination}
	J(\theta)-J^* \leq \lambda(a) \Vert\nabla_\theta J(\theta) \Vert^2,
	\ene
	where $J^*$ is the optimal LQT cost of \eqref{ce_problem}.
\end{lemma}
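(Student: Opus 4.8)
The plan is to prove \eqref{gradient domination} as a gradient-domination (Polyak--{\L}ojasiewicz) inequality for the average-cost LQT objective, transplanting the argument devised for LQR in \cite{fazel2018global} to the setting with a feedforward term. Let $\hat\theta = [\hat K~\hat l]$ be the unique optimal policy of \eqref{ce_problem}, so $J(\hat\theta) = J^*$, and recall from Lemma \ref{theo_gradient} that the gradient factors through the augmented stationary second-moment matrix $\Phi_\theta$, which couples the stationary covariance $\Sigma_V$ with the stationary mean $\overline{x}$. The two ingredients I would assemble are an exact cost-difference identity and this gradient factorization.

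First I would derive the cost-difference identity by a telescoping argument over the stationary distribution of the optimal closed loop, using that the value function of $\theta$ is affine-quadratic in the state, $x^\top P_V x + 2 g_\xi^\top x + \text{const}$. Writing $J(\hat\theta) - J(\theta)$ in terms of the advantage of $\theta$'s value function and completing the square with respect to the control deviation yields the upper bound
\bee \label{plan_ub}
J(\theta) - J^* \;\leq\; \tr\!\left( \Phi_{\hat\theta}\, E_\theta^\top \big(R + \hb^\top P_V \hb\big)^{-1} E_\theta \right) \;\leq\; \frac{\big\|\Phi_{\hat\theta}\big\|}{\us(R)}\,\tr\!\left(E_\theta^\top E_\theta\right),
\ene
where $E_\theta$ is the model-based natural-gradient factor corresponding to the block $[E_V~G_\xi]$ of Lemma \ref{theo_gradient}, which vanishes exactly at $\hat\theta$, and the second inequality uses $\big(R + \hb^\top P_V \hb\big)^{-1} \preceq R^{-1}$ together with $\Phi_{\hat\theta} \succ 0$ being a fixed constant.

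Next I would invoke the gradient factorization $\nabla_\theta J(\theta) = 2\, E_\theta\, \Phi_\theta$ (inherited from Lemma \ref{theo_gradient} through the change of coordinates in \eqref{conn_data_estimation_2} and Lemma \ref{theo_equiv}), which gives $\|\nabla_\theta J(\theta)\|^2 \geq 4\,\us(\Phi_\theta)^2\,\tr(E_\theta^\top E_\theta)$, hence $\tr(E_\theta^\top E_\theta) \leq \tfrac{1}{4}\us(\Phi_\theta)^{-2}\|\nabla_\theta J(\theta)\|^2$. Substituting into \eqref{plan_ub} delivers \eqref{gradient domination} with
\bee \label{plan_lambda}
\lambda(a) \;=\; \frac{\big\|\Phi_{\hat\theta}\big\|}{4\,\us(R)\,\inf_{\theta \in \cS_\theta(a)} \us(\Phi_\theta)^2}.
\ene

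The main obstacle is not the algebra above but certifying that $\lambda(a)$ in \eqref{plan_lambda} is finite, i.e. that $\inf_{\theta \in \cS_\theta(a)} \us(\Phi_\theta) > 0$ uniformly over the sublevel set. This is where the feedforward term makes the LQT analysis genuinely harder than LQR: one cannot simply bound $\Phi_\theta$ from below by a covariance, since the rank-one mean correction in $\Phi_\theta$ allows the smallest singular value to shrink as $\|\overline{x}\|$ grows. Evaluating the quadratic form of $\Phi_\theta$ on a unit vector with state block $v$ and scalar block $c$ gives $v^\top \Sigma_V v + (v^\top \overline{x} + c)^2$, so the leading block obeys $\Sigma_V \succeq W \succ 0$ by its Lyapunov recursion, while the sublevel constraint $J(\xi) \leq a$ forces $\us(Q)\|\overline{x} - \delta\|^2 \leq a$ and hence $\|\overline{x}\|$ to stay bounded. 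Combining these two facts yields a strictly positive lower bound on $\us(\Phi_\theta)$ over $\cS_\theta(a)$ and a finite $\lambda(a)$, completing the proof; the same sublevel compactness also keeps $\rho(\ox_1 V)$ bounded away from the stability boundary, so all quantities remain well defined throughout.
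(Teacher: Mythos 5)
Your proposal is correct, but there is nothing in the paper to compare it against: the paper never proves this lemma. It is imported verbatim from the model-based LQT policy-optimization work \cite[Lemma 6]{zhao2023global}, and the paper's own contribution (Lemma \ref{theo_equiv} and Lemma \ref{theo_conv_theta}) consists precisely in reducing the DeePO iteration to a form where that cited result applies. What you have done is reconstruct, essentially from scratch, the proof that lives in the cited reference: the average-cost performance-difference identity evaluated along the optimal policy's stationary distribution, completion of squares in the control deviation giving $J(\theta)-J^* \leq \tr\bigl(\Phi_{\hat\theta}\,\tilde E_\theta^\top (R+\hb^\top P_V \hb)^{-1}\tilde E_\theta\bigr)$ with $\tilde E_\theta = [E_\theta~G_\theta]$, and the factorization $\nabla_\theta J(\theta) = 2\tilde E_\theta \Phi_\theta$, which together yield \eqref{gradient domination} with your $\lambda(a)$. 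Your handling of the genuinely LQT-specific obstacle is also sound: the quadratic form of $\Phi_\theta$ at a unit vector $(v,c)$ equals $v^\top \Sigma_V v + (v^\top \overline{x}+c)^2$, the Lyapunov recursion gives $\Sigma_V \succeq W \succ 0$, and the sublevel constraint forces $\us(Q)\Vert \overline{x}-\delta\Vert^2 \leq a$, so a short contradiction argument yields a uniform lower bound $\us(\Phi_\theta) \geq \bigl(1/\us(W)+(1+\beta(a)/\sqrt{\us(W)})^2\bigr)^{-1}$ with $\beta(a)=\Vert\delta\Vert+\sqrt{a/\us(Q)}$, hence a finite $\lambda(a)$. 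Two cosmetic points: your chain of inequalities naturally controls the Frobenius norm of the gradient, whereas the paper's notation declares $\Vert\cdot\Vert$ to be the $2$-norm, a discrepancy that costs only a dimension factor absorbable into $\lambda(a)$; and the remark that $\tilde E_\theta$ vanishes at $\hat\theta$ is true but never used. What your route buys is a self-contained argument with explicit, checkable constants, making visible the one place where LQT is harder than LQR (the rank-one mean correction inside $\Phi_\theta$); what the paper's route buys is brevity, at the price of leaving that content hidden inside a citation.
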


Also, we need to show the local Lipschitz smoothness of $J(\theta)$.

\begin{lemma}[Local smoothness] \label{lip_smooth}
	For any $a \geq J^*$ and for any $\theta, \theta' \in \cS_\theta(a)$ such that $\theta + \phi(\theta'-\theta) \in \cS_\theta(a)$ for all $\phi \in (0,1)$, there exists $\ell(a) > 0$ such that
	\bee
	J(\theta) \leq J(\theta') + \langle \nabla_\theta J(\theta), \theta' - \theta \rangle + \ell(a)\Vert \theta'-\theta\Vert^2/2.
	\ene
\end{lemma}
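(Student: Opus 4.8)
The plan is to prove the stated bound by controlling the curvature of $J$ along the segment joining $\theta$ and $\theta'$. Since $J(\theta)$ and $J(\xi)$ coincide under the parameterization \eqref{parametrization} and the associated map is a fixed linear object, it suffices to bound the Hessian of $J$ (equivalently, the Lipschitz constant of the gradient $\nabla_\theta J$) uniformly on the sublevel set $\cS_\theta(a)$, and then invoke Taylor's theorem with integral remainder along $\theta(\phi) = \theta + \phi(\theta'-\theta)$. The hypothesis that the \emph{entire} segment stays in $\cS_\theta(a)$ is precisely what lets us apply the uniform bound at each $\theta(\phi)$, so the second-order remainder is dominated by $\tfrac{\ell(a)}{2}\Vert \theta'-\theta\Vert^2$, yielding the claim.

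First I would establish uniform boundedness of all policy-dependent quantities on $\cS_\theta(a)$. By the standard mean--fluctuation decomposition of the average cost, $J(\theta)$ splits as $\tr(P_V W)$ plus the nonnegative steady-state tracking cost of the mean $(\overline{x}-\delta)^\top Q(\overline{x}-\delta)$ together with the mean control penalty; in particular $J(\theta) \geq \tr(P_V W)$. Hence $J(\theta)\le a$ with $W\succ0$ gives $\Vert P_V\Vert \le \tr(P_V) \le \tr(P_V W)/\underline{\sigma}(W) \le a/\underline{\sigma}(W)$. A bounded solution of the Lyapunov equation defining $P_V$, combined with $Q\succ0$, yields a uniform contraction margin for the closed loop $\ox_1 V = \ha + \hb K$ (so $\rho(\ox_1V)$ stays bounded away from $1$), hence uniform bounds on $Y_V=(I-\ox_1V)^{-1}$ and on the stationary covariance $\Sigma_V$. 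These then bound the feedforward block through $\overline{x}=Y_V\ox_1h$, $g_\xi$ and $G_\xi$. All constants depend only on $a$ and the fixed data $Q,R,W,\delta,\ou_0,\ox_0,\ox_1$.

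The technical core is the second step: bounding $\Vert \nabla^2_\theta J\Vert$ on $\cS_\theta(a)$. Starting from the explicit gradient of Lemma \ref{theo_gradient} (rewritten in $\theta$-coordinates through \eqref{parametrization} and \eqref{conn_data_estimation_2}), I would differentiate once more and show each factor is Lipschitz on the sublevel set. The only $\theta$-dependence sits in $P_V$, $\Sigma_V$, $Y_V$, $\overline{x}$ and $g_\xi$, each a smooth rational function of $\theta$ on the stable region; the workhorse is a perturbation estimate for the Lyapunov/Sylvester equations they solve, e.g. that a perturbation $\Delta$ of $K$ changes $P_V$ by $O(\Vert\Delta\Vert)$ with constant set by the Step 1 bounds. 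I expect this to be the main obstacle, and the genuinely new part relative to the LQR analysis of \cite{fazel2018global}: the feedforward terms $l$, $\overline{x}$ and $g_\xi$ — which solve affine equations built from $Y_V$ and the reference $\delta$ — produce cross terms in the Hessian absent in LQR, and one must show they too are Lipschitz with $a$-dependent constants and that the perturbation bounds remain uniform across the whole segment.

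Finally, with $\ell(a) \geq \sup_{\phi\in[0,1]}\Vert \nabla^2_\theta J(\theta(\phi))\Vert$ available from Step 2, Taylor's theorem gives
\[
\Big| J(\theta') - J(\theta) - \langle \nabla_\theta J(\theta),\, \theta'-\theta\rangle \Big| \leq \frac{\ell(a)}{2}\Vert \theta'-\theta\Vert^2,
\]
from which the asserted quadratic upper bound (the descent inequality used in the convergence proof) follows. A convenient alternative that sidesteps forming the full Hessian is to bound the scalar second derivative $\tfrac{d^2}{d\phi^2} J(\theta(\phi))$ directly along the segment; this is often cleaner because it only requires the directional Lyapunov perturbation in the single direction $\theta'-\theta$.
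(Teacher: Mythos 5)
Your proposal takes essentially the same route as the paper: the paper's own proof of this lemma is exactly the observation that local smoothness follows from an upper bound on $\Vert \nabla^2 J(\theta)\Vert := \sup_{\Vert Z\Vert_F = 1}\nabla^2 J(\theta)[Z,Z]$ over the sublevel set, with the (admittedly tedious) details omitted for space. Your sketch supplies plausible versions of precisely those omitted details---uniform bounds on $P_V$, $Y_V$, $\Sigma_V$, $\overline{x}$, $g_\xi$ from $J(\theta)\leq a$, Lyapunov perturbation estimates, and Taylor's theorem along the segment, which is why the segment hypothesis appears in the statement---so it is consistent with and not materially different from the paper's argument.
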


\begin{proof}
	The local smoothness of $J(\theta)$ can be proved by providing an upper bound for $\Vert \nabla^2 J(\theta)\Vert := \sup_{\Vert Z\Vert_F = 1}~\nabla^2J(\theta)[Z,Z]$. The complete proof is tedious and hence omitted due to limited space. 
\end{proof}

By Lemma \ref{grad_dom} and \ref{lip_smooth}, it is able to prove the global linear convergence of the model-based iteration \eqref{equ_model_based_po}. For a initial policy $\theta^0\in\cS_\theta$, define $l^0 = l(J(\theta^0))$ and $\gamma^0 = \gamma(J(\theta^0))$. Then we have the following convergence result.

\begin{lemma} \label{theo_conv_theta}
	For $\theta^0 \in \cS_\theta$ and stepsize $\eta \in (0,1/(\ell^0 \Vert M \Vert)]$, the update \eqref{equ_model_based_po} leads to $\theta^{t} \in \cS_\theta(J(\theta^0)), \forall t \in \bN$. Moreover, $\theta^{t}$ satisfies
	$$
	J(\theta^{t+1}) - J^* \leq \left(1 -\nu\right)(J(\theta^t)-J^*),
	$$
	where $\nu = \eta\left(1-\ell^0\Vert M \Vert\eta /2 \right)\underline{\sigma}(M)/ \lambda^0$. 
\end{lemma}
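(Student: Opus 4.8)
The plan is to carry out the standard Polyak--\L{}ojasiewicz (gradient-domination) argument for the scaled gradient iteration \eqref{equ_model_based_po}, feeding in the local smoothness of Lemma \ref{lip_smooth}, the gradient domination of Lemma \ref{grad_dom}, and the positive definiteness of $M$ from Lemma \ref{theo_equiv}. I would prove both assertions simultaneously by induction on $t$, taking as inductive hypothesis that $\theta^t \in \cS_\theta(J(\theta^0))$; the base case is immediate since $J(\theta^0) \le J(\theta^0)$. The substance is the inductive step, which must (i) produce the one-step contraction and (ii) re-establish $\theta^{t+1} \in \cS_\theta(J(\theta^0))$ so that the constants $\ell^0 := \ell(J(\theta^0))$ and $\lambda^0 := \lambda(J(\theta^0))$ remain valid at the next iterate.

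Writing $g := \nabla_\theta J(\theta^t)$, so that $\theta^{t+1} - \theta^t = -\eta M g$, the smoothness bound of Lemma \ref{lip_smooth} gives the descent inequality
$$J(\theta^{t+1}) \le J(\theta^t) - \eta\, g^\top M g + \tfrac{\ell^0 \eta^2}{2}\Vert M g\Vert^2 .$$
Since $M \succ 0$, I would use $g^\top M g \ge \underline{\sigma}(M)\Vert g\Vert^2$ together with $\Vert M g\Vert^2 = g^\top M^2 g \le \Vert M\Vert\, g^\top M g$ to reach
$$J(\theta^{t+1}) \le J(\theta^t) - \eta\bigl(1 - \tfrac{\ell^0\Vert M\Vert\eta}{2}\bigr) g^\top M g .$$
The step-size choice $\eta \le 1/(\ell^0\Vert M\Vert)$ keeps the bracket at least $1/2 > 0$, so the step is strictly decreasing, whence $J(\theta^{t+1}) \le J(\theta^t) \le J(\theta^0)$ and thus $\theta^{t+1} \in \cS_\theta(J(\theta^0))$. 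Applying the gradient-domination bound $\Vert g\Vert^2 \ge (J(\theta^t) - J^*)/\lambda^0$ from Lemma \ref{grad_dom} and subtracting $J^*$ then yields $J(\theta^{t+1}) - J^* \le (1-\nu)(J(\theta^t) - J^*)$ with exactly $\nu = \eta(1 - \ell^0\Vert M\Vert\eta/2)\underline{\sigma}(M)/\lambda^0 > 0$, closing the induction.

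The part that needs care --- and which I expect to be the main obstacle --- is that Lemma \ref{lip_smooth} is only a local statement, valid when the entire segment $\theta^t + \phi(\theta^{t+1} - \theta^t)$, $\phi \in (0,1)$, stays inside $\cS_\theta(J(\theta^0))$, whereas $J$ is finite only on stabilizing policies. I would settle this with the usual coercivity/continuity argument: because $J(\theta) \to \infty$ as $\theta$ approaches the stability boundary $\rho(A+BK) = 1$, the sublevel set $\cS_\theta(J(\theta^0))$ is compact and bounded away from that boundary. Setting $\phi^\star = \sup\{\phi \in [0,1] : \theta^t + \varphi(\theta^{t+1} - \theta^t) \in \cS_\theta(J(\theta^0))\ \forall\, \varphi \le \phi\}$, continuity gives $\phi^\star > 0$, and applying the descent estimate on $[0,\phi^\star)$ shows the cost never exceeds $J(\theta^0)$ along the segment; hence the endpoint at $\phi^\star$ is still strictly interior, so $\phi^\star$ cannot be attained below $1$ and must equal $1$. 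This simultaneously certifies admissibility of the step and shows the stability constraint $\rho(\ox_1 V) < 1$ is preserved without being enforced by the projection --- exactly the behaviour promised after \eqref{projected_grad}. Iterating the one-step contraction over $t$ then gives the geometric decay for all $t \in \bN$.
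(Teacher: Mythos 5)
Your proposal is correct, arrives at exactly the contraction factor $\nu$ claimed, and its Polyak--\L{}ojasiewicz core coincides with the paper's; but the two technical ingredients are handled by genuinely different devices. The paper first changes variables, $\psi = M^{-1/2}\theta$, so that \eqref{equ_model_based_po} becomes plain gradient descent on $\tilde{J}(\psi)$, and transfers smoothness and gradient domination to $\psi$-space; you stay in $\theta$-space and work with the weighted quantity $g^\top M g$, using $\underline{\sigma}(M)\Vert g\Vert^2 \le g^\top M g$ and $\Vert M g\Vert^2 \le \Vert M\Vert\, g^\top M g$ --- algebraically the same thing (indeed $\Vert \nabla_\psi \tilde{J}(\psi)\Vert^2 = g^\top M g$), but more direct. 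The substantive difference is how each proof licenses applying the \emph{local} smoothness of Lemma \ref{lip_smooth} to the full step, which is circular on its face since feasibility of $\theta^{t+1}$ is precisely what is being proved. The paper uses a discrete subdivision: enlarge the sublevel set to $\cS_\psi(a+b)$, use the positive separation $d$ between $\cS_\psi(a)$ and the complement of $\cS_\psi(a+b)$, and split $\eta = N\tau$ into sub-steps each shorter than $d$, re-entering $\cS_\psi(a)$ after every sub-step. You instead run a continuous induction along the segment (maximal feasible $\phi^\star$, descent on $[0,\phi^\star)$, hence the endpoint is interior and $\phi^\star=1$). Your route is cleaner --- no $\tau$, $N$, $b$ bookkeeping --- but it buys this by invoking compactness of $\cS_\theta(J(\theta^0))$ and blow-up of $J$ at the stability boundary, which you assert rather than prove (it requires $W \succ 0$ and is a fact imported from the model-based LQT analysis of \cite{zhao2023global}); to be fair, the paper's argument rests on comparable unproven facts, since its claims that $d>0$ and that $\ell(a+b)\le(1+\phi)\ell(a)$ for some $b>0$ implicitly need the same compactness/continuity properties. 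One small fix: at $t=0$, where $J(\theta^t)=J(\theta^0)$ sits on the boundary of the sublevel set, continuity alone does not give $\phi^\star>0$; you also need that $-Mg$ is a strict descent direction, i.e. $g^\top M g>0$ whenever $g\ne 0$ (and the step is trivial when $g=0$).
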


\begin{proof}
	First, we translate \eqref{equ_model_based_po} into the normal gradient descent form. Let $\psi = M^{-\frac{1}{2}}\theta$. Define $\tilde{J}(\psi) \equiv J(\theta)$ and $\cS_\psi(a) = \{\psi|M^\frac{1}{2}\psi \in \cS_\theta(a)\}$, then 
\bee \label{pg_psi}
\psi^{t+1} = \psi^t - \eta M^{\frac{1}{2}}\nabla_\theta J(\theta^t) = \psi^t - \eta \nabla_\psi \tilde{J}(\psi^t).
\ene
By Lemma \ref{grad_dom} and \ref{lip_smooth}, for any $a \geq J^*$, we have the gradient domination and Lipschitz smoothness property of $\tilde{J}(\psi)$ on $\psi$ as
\bee \label{gd_psi}
\begin{aligned}
&~~~\Vert \nabla_\psi \tilde{J}(\psi)\Vert^2 = \Vert M^{\frac{1}{2}} \nabla_\theta J(\theta) \Vert^2 \geq \underline{\sigma}(M)\Vert\nabla_\theta J(\theta) \Vert^2\\
& \geq \frac{\underline{\sigma}(M)}{\lambda(a)}(J(\theta)-J^*) \geq \frac{\underline{\sigma}(M)}{\lambda(a)}(\tilde{J}(\psi)-J^*),
\end{aligned}
\ene
and
\bee \label{ls_psi}
\begin{aligned}
	&~~~\Vert \nabla_\psi \tilde{J}(\psi) - \nabla_\psi \tilde{J}(\psi')\Vert = \Vert M^{\frac{1}{2}} \nabla_\theta J(\theta) - M^{\frac{1}{2}} \nabla_\theta J(\theta') \Vert\\ &\leq \ell(a)\Vert M \Vert^\frac{1}{2} \Vert \theta - \theta'\Vert \leq  \ell(a)\Vert M \Vert \Vert \psi - \psi'\Vert,
\end{aligned}
\ene
where $\theta$ and $\theta'$ satisfy the requirements in Lemma \ref{grad_dom} and \ref{lip_smooth}.

Then, we show that with stepsize $\eta \in (0,1/(\ell(a) \Vert M \Vert)]$, if $\psi^t \in \cS_\psi(a)$ for some $a>J^*$, then $\psi^{t+1}$ is in $\cS_\psi(a)$ by \eqref{pg_psi}. For simplicity, we use $\psi_{(\eta)}$ to denote $\psi^t - \eta\nabla_\psi \tilde{J}(\psi^t)$ here.

By the smoothness \eqref{ls_psi}, given $\phi \in (0,1)$, there exists $b > 0$ such that $\ell(a+b) \leq (1+\phi) \ell(a)$. Let $\cS_\psi^c(a+b)$ as the complementary set of $\cS_\psi(a+b)$. Clearly, the distance $d = \text{inf}\left\{\Vert\psi - \psi'\Vert\big|\forall \psi \in \cS_\psi(a) \text{ and } \psi' \in \cS_\psi^c(a+b)\right\}$ is larger than $0$. Choose a large enough $\overline{N}$ such that $2/\left(\overline{N}(1+\phi)\ell(a)\Vert M \Vert\right) < d/\Vert \nabla_\psi \tilde{J}(\psi^t) \Vert$, where $\Vert \nabla_\psi \tilde{J}(\psi^t) \Vert > 0$ is ensured by \eqref{gd_psi} and $a>J^*$, and choose $\tau \in \left(0,2/(\overline{N}(1+\phi)\ell(a)\Vert M \Vert)\right]$. Then, it holds that $\Vert \psi_{(\tau)} - \psi^t \Vert < d$, which implies $\psi_{(\tau)} \in \cS_\psi(a+b)$. By the smoothness \eqref{ls_psi} on $\cS_\psi(a+b)$, we have $\tilde{J}(\psi_{(\tau)}) - \tilde{J}(\psi^t) \leq -\tau(1-\tau\ell(a+b)\Vert M \Vert/2)\Vert \nabla_\psi \tilde{J}(\psi^t) \Vert^2\leq0$, which implies that $\psi_{(\tau)} \in \cS_\psi(a)$. Similarly, it holds that $\Vert \psi_{(2\tau)} - \psi_{(\tau)} \Vert < d$, which implies $\psi_{(2\tau)} \in \cS_\psi(a+b)$. By induction we can show that $\psi_{(N\tau)} \in \cS_\psi(a)$ for some $N\in \bN_+$ as long as $\tau N \ell(a+b)\Vert M \Vert/2 < 1$. Since $\ell(a+b) \leq (1+\phi)\ell(a)<2\ell(a)$, for $\eta$ in $(0,1/(\ell(a) \Vert M \Vert)]$, we can choose $\tau$, $N$ such that $\eta = N\tau$ to show $\psi^{t+1} = \psi_{(\eta)} \in \cS_\psi(a)$.

Next, we show that \eqref{pg_psi} converges linearly to $\psi^* = M^{-\frac{1}{2}}\theta^*$. For the stepsize $\eta \in (0,1/(\ell^0 \Vert M \Vert)]$, the update \eqref{pg_psi} satisfies $\psi^t \in \cS_\psi(J(\theta^0)), \forall t\in\bN$. And the cost satisfies 
$$\tilde{J}(\psi^{t+1}) \leq \tilde{J}(\psi^t) - \eta\left(1-\frac{\ell^0\Vert M \Vert \eta}{2}\right)\Vert\nabla_\psi \tilde{J}(\psi^t)\Vert^2.$$
Using the gradient domination property \eqref{gd_psi} we have
$$\tilde{J}(\psi^{t+1}) \leq \tilde{J}(\psi^t) - \nu(\tilde{J}(\psi^t)-J^*).$$
By reorganizing the terms we obtain 
\bee \label{lc_psi}
\tilde{J}(\psi^{t+1}) - J^* \leq \left(1 -\nu\right)(\tilde{J}(\psi^t)-J^*).
\ene

Finally, since $\tilde{J}(\psi) \equiv J(\theta)$, \eqref{lc_psi} is equivalent to 
$$
J(\theta^{t+1}) - J^* \leq \left(1 -\nu\right)(J(\theta^t)-J^*).
$$
\end{proof}

Lemma \ref{theo_conv_theta} shows that the update \eqref{equ_model_based_po} converges linearly to the optimal policy $[\hat{K}~~\hat{l}]$ of \eqref{ce_problem}. As a direct corollary of Lemma \ref{theo_equiv} and \ref{theo_conv_theta}, the convergence of DeePO \eqref{projected_grad} is following.
\begin{theorem} \label{theo_main}
	For any $\xi^0 \in \cS$ and stepsize $\eta \in (0,1/(\ell^0 \Vert M \Vert)]$, the DeePO iteration \eqref{projected_grad} converges linearly:
	$$
	J(\xi^{t+1}) - J^* \leq \left(1 -\nu\right)(J(\xi^t)-J^*), t \in \bN,
	$$
	and the optimal solution is:
	 $$\xi^* = D_0^{-1}\begin{bmatrix}
			\hat{K} & \hat{l} \\
			I_n & 0
		\end{bmatrix}. $$
\end{theorem}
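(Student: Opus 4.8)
The plan is to assemble Theorem \ref{theo_main} from the two facts already established at the level of the model-based parameter $\theta$: the exact equivalence of one DeePO step \eqref{projected_grad} with one scaled model-based step \eqref{equ_model_based_po} (Lemma \ref{theo_equiv}), and the linear convergence of the latter (Lemma \ref{theo_conv_theta}). Since the covariance parameterization \eqref{parametrization} sets up a correspondence between feasible $\xi$ and $\theta$ under which $J(\xi)=J(\theta)$ by definition, everything reduces to transporting both the contraction factor and the limit point back to the $\xi$-variable; no new analysis of $J$ is needed.

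First I would fix $\xi^0 \in \cS$ and let $\theta^0=[K^0~~l^0]$ be its image under \eqref{parametrization}, so that $\theta^0 \in \cS_\theta$. By Lemma \ref{theo_equiv}, applying the DeePO update to $\xi^0$ produces a $\xi^1$ whose image under \eqref{parametrization} is exactly the $\theta^1$ produced by the model-based update \eqref{equ_model_based_po} on $\theta^0$. Arguing inductively, at every iteration $t$ the DeePO iterate $\xi^t$ stays linked to the model-based iterate $\theta^t$ through \eqref{parametrization}, and hence $J(\xi^t)=J(\theta^t)$ for all $t$. Invoking Lemma \ref{theo_conv_theta} with $\eta \in (0,1/(\ell^0\Vert M\Vert)]$ gives $J(\theta^{t+1})-J^* \leq (1-\nu)(J(\theta^t)-J^*)$ with $\nu=\eta(1-\ell^0\Vert M\Vert\eta/2)\underline{\sigma}(M)/\lambda^0$, and substituting $J(\xi^t)=J(\theta^t)$ delivers the stated linear rate for \eqref{projected_grad} verbatim.

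To identify the limit, I would use that Lemma \ref{theo_conv_theta} drives $\theta^t$ to the unique optimizer $\theta^*=[\hat K~~\hat l]$ of \eqref{ce_problem}. The corresponding $\xi^*=[V^*~~h^*]$ is recovered by inverting \eqref{parametrization}, which is legitimate because persistency of excitation makes $\Lambda=\tfrac{1}{T}D_0D_0^\top$, equivalently the stacked matrix $[\ou_0^\top~~\ox_0^\top]^\top$, positive definite and hence invertible. Stacking the two block relations of \eqref{parametrization} yields
$$\Lambda\,\xi^* = \begin{bmatrix} \hat K & \hat l \\ I_n & 0 \end{bmatrix},$$
so inverting $\Lambda$ produces the closed form for $\xi^*$ asserted in the theorem. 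I would additionally verify $\xi^*\in\cS$: the equality constraints $\ox_0 V^*=I$ and $\ox_0 h^*=0$ are precisely the lower block rows of the relation above, while \eqref{conn_data_estimation_2} converts the stability of $\hat A+\hat B\hat K$ into $\rho(\ox_1 V^*)=\rho(\hat A+\hat B\hat K)<1$.

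I do not anticipate a substantive obstacle, since Theorem \ref{theo_main} is essentially a bookkeeping corollary of Lemmas \ref{theo_equiv} and \ref{theo_conv_theta}. The single point that requires care is the induction keeping the $\xi$- and $\theta$-iterates linked by \eqref{parametrization} at \emph{every} step rather than merely at initialization, so that the per-iterate identity $J(\xi^t)=J(\theta^t)$, and with it the contraction factor $1-\nu$, propagates along the entire trajectory. This is exactly the content of Lemma \ref{theo_equiv}, but it must be applied iteratively; once it is, both the rate and the limit follow immediately.
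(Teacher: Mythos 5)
Your proposal is correct and takes essentially the same route as the paper, which itself presents Theorem \ref{theo_main} as a direct corollary of Lemma \ref{theo_equiv} and Lemma \ref{theo_conv_theta}: the iterated application of the equivalence, the feasibility propagation via $\theta^t \in \cS_\theta(J(\theta^0))$, and the transport of the rate through $J(\xi^t)=J(\theta^t)$ is exactly the intended bookkeeping. Note only that your limit formula $\xi^* = \Lambda^{-1}\begin{bmatrix} \hat{K} & \hat{l} \\ I_n & 0\end{bmatrix}$ is the correct reading of the parameterization \eqref{parametrization}; the theorem's $D_0^{-1}$ appears to be a typo, since $D_0 \in \bR^{(m+n)\times T}$ is not square and only $\Lambda = \tfrac{1}{T}D_0D_0^\top$ is invertible under persistency of excitation.
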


After obtaining the optimal policy $\xi^*$ by \eqref{projected_grad}, the model-based policy $\hat{K},\hat{l}$ can be recovered by \eqref{parametrization}. The proof presented in this section extends and improves upon the sublinear convergence result established in \cite{zhao2024data}. By disregarding the feedforward term in the LQT problem, the results in Lemma \ref{theo_equiv} and the subsequent analysis naturally apply to the DeePO method for the LQR problem as well.

\section{Numerical experiment}
We use a numerical experiment to validate the linear convergence of our method \eqref{projected_grad}. 

We randomly generate the system matrix $(A,B)$ with $n=4,m=2$ from a standard normal distribution and normalize $A$ such that $\rho(A) = 0.8$. In this way, $(A,B)$ is a open-loop stable system, which is convenient for choosing the initial policy. The specific matrices used are given by
$$\begin{aligned}
A&=\begin{bmatrix}
	-0.229&	0.247	&-0.511&	0.493\\
	0.846&	0.159	&0.722&	0.529\\
	-0.018&	0.070	&0.300&	0.758\\
	0.247&	0.546	&-0.511&	-0.176
\end{bmatrix},\\
B&=\begin{bmatrix}
	-0.631&	0.938\\
	0.262&	-0.796\\
	0.461&	-0.180\\
	0.774&	0.112
\end{bmatrix}.
\end{aligned}
$$
It is straightforward to check that $(A,B)$ is controllable. Let $Q = I_4$ and $R = I_2$. The data matrices $(X_0, U_0, X_1)$ are generated with a length of $T = 10$. Here, $x_0$ and $U_0$ are initialized from a normal distribution, and the remaining states are computed using the system dynamics $(A,B)$. We choose $K = 0$ and $l = 0$ as the initial policy, which corresponds to $\xi^0 = 0$. It is first checked that $\rho(\ox_1V) <1$ holds, ensuring $\xi^0\in\cS$. The stepsize is chosen to be $\eta = 0.02$.

\begin{figure}[t!]
	\centering
	\includegraphics[scale=0.6]{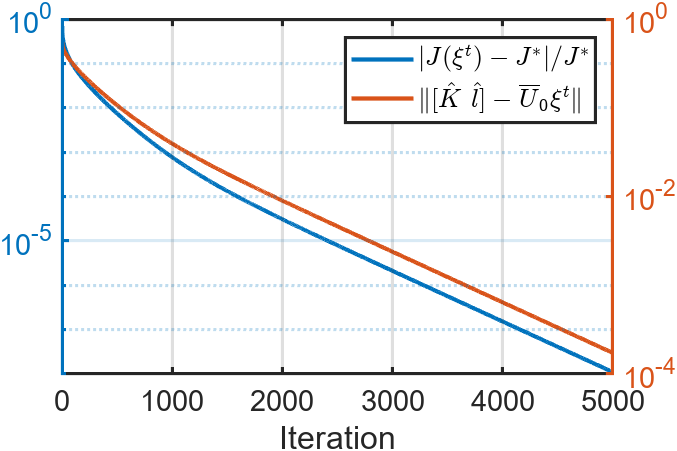}
	\caption{Convergence of DeePO for LQT.}
	\label{fig:con}
\end{figure}

We run the iteration \eqref{projected_grad} for $5000$ times and the results are illustrated in Fig \ref{fig:con}. The blue line shows the error of cost function $J(\xi^t)$, while the red line shows the error between DeePO policy $\xi^t$ and the optimal indirect policy $[\hat{K} ~ \hat{l}]$. Clearly, the results validate Theorem \ref{theo_main}, demonstrating that $\xi^t$ converges linearly to the optimal policy $\xi^*$.

\section{conclusion}
In this work, we extended the DeePO framework to the LQT problem. By introducing a covariance parameterization of the LQT policy, we enabled the policy gradient to be computed directly from data. Subsequently, we proposed a projected policy gradient method—DeePO iteration—to solve the LQT problem. By establishing a connection between DeePO and model-based policy optimization, we proved that the DeePO iteration achieves linear convergence and that the optimal policy is equivalent to the one derived using indirect methods. A numerical experiment was conducted to validate the theoretical findings.

In future work, we will extend the DeePO framework to develop an online version for LQT, building on the results presented in this paper. Additionally, it will be valuable to address the more general case where the reference trajectory varies over time.

\bibliographystyle{IEEEtran}
\bibliography{mybib}

\begin{appendices}
\section{Proofs}
\subsection{Proof of Lemma \ref{theo_cost}}
	To calculate the cost function, define the advantage function
	$f(x;\xi) = \bE \{\sum_{t=0}^{\infty}(x_t-\delta)^{\top}Q(x_t-\delta)+u_t^{\top}Ru_t - J(\xi)|x_0=x, u_t = \ou_0Vx_t + \ou_0h\}.$
	By using backward DP \cite{bertsekas1995dynamic}, it can be shown that $f(x;V,h)$ has a quadratic form $f(x;V,h) = x^{\top}Px + 2g^{\top}x+s$, where $P, g \text{ and } s$ are to be determined. By the Bellman equation, it holds that $f(x_t;\xi) = \bE \{(x_t-\delta)^{\top}Q(x_t-\delta)+u_t^{\top}Ru_t - J(\xi) + f(x_{t+1};\xi) \}$. Then we have
	\been
	\begin{aligned}
		P &= Q+V^\top \ou_0^\top R \ou_0V+V^\top \ox_1^\top P\ox_1V,\\
		g^\top &= -\delta^\top Q + h^\top \ou_0^\top R\ou_0V + h^\top \ox_1^\top P \ox_1 V + g^\top \ox_1V,\\
		s &= \delta^\top Q \delta + h^\top \ou_0^\top R\ou_0 h + h^\top \ox_1^\top P\ox_1 h + 2g^\top \ox_1h \\
		&~~+ \tr(PW) -J(\xi) + s.
	\end{aligned}
	\enen
	
	Therefore, we obtain that $P = P_V$, $g = g_\xi$, and the proof is complete by the third equation.

\subsection{Proof of Lemma \ref{theo_gradient}}
	First, we give the gradient with respect to $h$:
	\bee \label{grad_h}
	\begin{aligned} 
		\nabla_h J(V,h) &= 2(\ou_0^\top R\ou_0 h + \ox_1^\top P_V\ox_1 h + \ou_0^\top R\ou_0VY_V\ox_1h \\
		&~~~+ \ox_1^\top Y_V^\top V^\top\ou_0^\top R\ou_0 h + \ox_1^\top P_V \ox_1 VY_V\ox_1h \\
		&~~~+ \ox_1^\top Y_V^\top V^\top \ox_1^\top P_V \ox_1 h -\ox_1^\top Y_V^\top Q\delta) \\
		&= 2(\ox_1^\top g_\xi + E_V\overline{x} + \ou_0^\top R\ou_0 h\ + \ox_1^\top P_V\ox_1 h)\\
		&= 2(G_\xi + E_V\overline{x}).
	\end{aligned}
	\ene
	
	Using the stationary state distribution $\Gamma$, the cost $J(\xi)$ can be written equivalently as 
	\been
	\begin{aligned}
		J(\xi) &= \bE_{\overline{x}\sim\Gamma}\left\{(\overline{x}-\delta)^\top Q (\overline{x}-\delta) \right.\\
		&~~~\left.+ (\ou_0V\overline{x}+\ou_0h)^\top R(\ou_0V\overline{x}+\ou_0h)\right\}\\
		&= \tr\left\{(Q+V^\top\ou_0^\top R \ou_0 V)(\Sigma_V + \overline{x}\overline{x}^T)\right\} + \\
		&~~~2(-\delta^\top Q+h^\top \ou_0^\top R \ou_0 V)\overline{x} + \delta^\top Q \delta + h^\top \ou_0^\top R\ou_0 h.
	\end{aligned}
	\enen
	Then, the gradient with respect to $V$ is given by
	\been
	\begin{aligned}
		\nabla_V J(V,h) &= 2E_V\Sigma_V + 2\ox_1^\top Y_V^\top Q \overline{x}\overline{x}^\top + 2\ou_0^\top R \ou_0 V \overline{x} \overline{x}^\top \\
		&~~~+ 2\ox_1^\top Y_V^\top V^\top \ou_0^\top R \ou_0 V \overline{x} \overline{x}^\top -2\ox_1^\top Y_V^\top Q\delta\overline{x}^\top\\
		&~~~ + 2\ou_0^\top R\ou_0 h\overline{x}^\top + 2\ox_1^\top Y_V^\top V^\top \ou_0^\top R\ou_0 h\overline{x}^\top.
	\end{aligned}
	\enen
	Also, the gradient with respect to $h$ can be written equivalently as 
	\been
	\begin{aligned}
		\nabla_h J(V,h) &= 2\ox_1^\top Y_V^\top (Q+V^\top\ou_0^\top R \ou_0 V) \overline{x} + 2\ou_0^\top R\ou_0 h\\
		&~~~ -2\ox_1^\top Y_V^\top Q\delta +2\ou_0^\top R \ou_0 V \overline{x}\\
		&~~~ +2\ox_1^\top Y_V^\top V^\top \ou_0^\top R\ou_0 h.
	\end{aligned}
	\enen
	
	We can find that $\nabla_V J(V,h) = 2E_V\Sigma_V + \nabla_h J(V,h)\overline{x}^\top$. Combined with \eqref{grad_h}, we have
	\been
		\nabla_\xi J(\xi) =2\left[E_V~G_\xi\right] \Phi_{\xi}.
	\enen
	
\subsection{Proof of Lemma \ref{theo_equiv}}
	Based on \eqref{conn_data_estimation}, \eqref{conn_data_estimation_2}, \eqref{parametrization} and \eqref{projected_grad}, we have
	$$
	\begin{aligned} \label{iter_K_deepo}
		&~~~~K^+ - K \\
		&=2\eta\ou_0(I-\ox_0^\dagger \ox_0)(E_V\Sigma_V + E_V\overline{x}\overline{x}^\top+G_\xi\overline{x}^\top) ~~~~~~~\\
	\end{aligned}
	$$
	\bee
	\begin{aligned}
		&=2\eta\ou_0(I-\ox_0^\dagger \ox_0)\left((\ou_0^\top R\ou_0V + \ox_1^\top P_V\ox_1V)(\Sigma_V+\overline{x}\overline{x}^\top) \right.\\
		& ~~~\left.+(\ox_1^\top g_\xi + \ou_0^\top R\ou_0 h\ + \ox_1^\top P_V\ox_1 h)\overline{x}^\top\right) \\
		&=2\eta M\left(( R\ou_0V + \hb^\top P_V\ox_1V)(\Sigma_V+\overline{x}\overline{x}^\top)\right.\\
		&~~~\left.+(\hb^\top g_\xi + R\ou_0 h\ + \hb^\top P_V\ox_1 h)\overline{x}^\top\right),\\
	\end{aligned}
	\ene
	where the last equation is because \eqref{conn_data_estimation} and $(I-\ox_0^\dagger \ox_0)\ox_0^\top=0$.
	
	According to \cite{zhao2023global}, the gradient of $J(\theta)$ in $K$ is 
	\bee
	\begin{aligned} \label{grad_K}
		\nabla_K J(\theta) &= 2\left(( RK + \hb^\top P_V(\ha+\hb K))(\Sigma_V+\overline{x}\overline{x}^\top)\right.\\
		&~~~ \left.+ (\hb^\top g_\xi + RK + \hb^\top P_V(\ha+\hb K))\overline{x}^\top\right) \\
		&=2\left(( R\ou_0V + \hb^\top P_V\ox_1V)(\Sigma_V+\overline{x}\overline{x}^\top)\right.\\
		&~~~ \left.+(\hb^\top g_\xi + R\ou_0 h + \hb^\top P_V\ox_1 h)\overline{x}^\top)\right).
	\end{aligned}
	\ene
	By \eqref{iter_K_deepo} and \eqref{grad_K} we have 
	\bee \label{equvalence_K}
	K^+ - K = -\eta M\nabla_K J(\xi). 
	\ene
	
	Similarly, it also holds 
	\bee
	\begin{aligned}
		l^+ - l &=2\eta\ou_0(I-\ox_0^\dagger \ox_0)(E_V\overline{x}+G_\xi) \\
		&=2\eta\ou_0(I-\ox_0^\dagger \ox_0)\left((\ou_0^\top R\ou_0V + \ox_1^\top P_V\ox_1V)\overline{x} \right.\\
		&~~~\left.+\ox_1^\top g_\xi + \ou_0^\top R\ou_0 h\ + \ox_1^\top P_V\ox_1 h\right) \\
		&=2\eta M\left((R\ou_0V + \hb^\top P_V\ox_1V)\overline{x}\right.\\
		&~~~ \left.+ \hb^\top g_\xi + R\ou_0 h\ + \hb^\top P_v\ox_1 h\right).
	\end{aligned}
	\ene
	According to \cite{zhao2023global}, the gradient of $J(\theta)$ in $l$ is 
	\bee
	\begin{aligned}
		\nabla_l J(\theta) &= 2\left((R\ou_0V + \hb^\top P_V\ox_1V)\overline{x} \right.\\
		&~~~\left.+\hb^\top g_\xi + R\ou_0 h\ + \hb^\top P_V\ox_1 h\right).
	\end{aligned}
	\ene
	Therefore we have 
	\bee \label{equvalence_l}
	l^+ - l = -\eta M\nabla_l J(\xi). 
	\ene
	
	By combining \eqref{equvalence_K} and \eqref{equvalence_l}, the projected gradient method on $\xi$ \eqref{projected_grad} is equivalent to the following gradient decent method on $\theta = [K,l]$:
	\bee
	\theta^+ = \theta - \eta M \nabla_\theta J(\theta).
	\ene
	
	Next, we show that $M$ is positive definite. Let $N = \ou_0(I-\ox_0^\dagger \ox_0)$. It holds that 
	\bee \label{M_equ_ntn}
	M = N N^\top + \ou_0\ox_0^\dagger\ox_0(I-\ox_0^\dagger \ox_0)\ou_0^\top = N N^\top.
	\ene
	By the persistent excitation assumption, we have
	$$\text{rank}(\Lambda) = 
	\text{rank}\begin{bmatrix}
		N + \ou_0\ox_0^\dagger\ox_0 \\ \ox_0
	\end{bmatrix} = \text{rank}\begin{bmatrix}
		N \\ \ox_0
	\end{bmatrix}=n+m.
	$$
	
	Then, we have $\text{rank}(N) = m$. Together with \eqref{M_equ_ntn}, it holds that $M$ is positive definite.

\end{appendices}
\end{document}